%
\documentclass[runningheads]{llncs}
\usepackage{graphicx}
\usepackage[vlined,linesnumbered,ruled]{algorithm2e}
\usepackage{amsmath}
\usepackage{hyperref}
\hypersetup{
    colorlinks=true,
    linkcolor=blue,
    filecolor=magenta,      
    urlcolor=cyan,
}
%

\begin{document}
\title{A Makespan and Energy-Aware Scheduling Algorithm for
Workflows under Reliability Constraint on a Multiprocessor
Platform}
\titlerunning{Makespan, Energy Effective Reliability Constrained Scheduler}
%
\author{Atharva Tekawade  \and
 Suman Banerjee }
\authorrunning{Tekawade and Banerjee}
%
\institute{Department of Computer Science and Engineering, \\ Indian Institute of Technology Jammu, Jammu 181221, India.\\
\email{\{2018uee0137,suman.banerjee\}@iitjammu.ac.in}}
\maketitle              
\begin{abstract}
Many scientific workflows can be modeled as a Directed Acyclic Graph (henceforth mentioned as DAG) where the nodes represent individual tasks, and the directed edges represent data and control flow dependency between two tasks. Due to the large volume of data, multi-processor systems are often used to execute these workflows. Hence, scheduling the tasks of a workflow to achieve certain goals (such as minimizing the makespan, energy, or maximizing reliability, processor utilization, etc.) remains an active area of research in embedded systems. In this paper, we propose a workflow scheduling algorithm to minimize the makespan and energy for a given reliability constraint. If the reliability constraint is higher, we further propose \textbf{E}nergy \textbf{A}ware \textbf{F}ault \textbf{T}olerant \textbf{S}cheduling (henceforth mentioned as EAFTS) based on active replication. Additionally, given that the allocation of task nodes to processors is known, we develop a frequency allocation algorithm that assigns frequencies to the processors. Mathematically we show that our algorithms can work for any satisfiable reliability constraint. We analyze the proposed solution approaches to understand their time requirements. Experiments with real-world Workflows show that our algorithms, MERT and EAFTS, outperform the state-of-art approaches. In particular, we observe that MERT gives 3.12\% lesser energy consumption and 14.14\% lesser makespan on average. In the fault-tolerant setting, our method EAFTS gives 11.11\% lesser energy consumption on average when compared with the state-of-art approaches. 
\keywords{DAG, Energy, Makespan, Reliability, Frequency, Fault-Tolerance, Scheduling Algorithm.}
\end{abstract}
%
%
%
\section{Introduction}

Many real-time embedded applications, in the domains of automotive \cite{bolchini2012reliability}, avionics \cite{sahner1987performance}, industrial automation \cite{yang2020task}, and communication networks \cite{hassija2020dagiov} can be modeled as a directed acyclic graph where each node represents an individual task, and a directed edge represents a dependency relationship (either data or control or both) between two tasks. In the literature, they have also been referred to as the \emph{task graph} or \emph{precedence-constrained task graph}. A real-time system is characterized by its inherent ability to respond within a given stimulated time for any kind of external event. Traditionally, these workflows are executed using a multiprocessor system where the individual processing elements may be heterogeneous. In a heterogeneous system, the same task may exhibit very different characteristics depending upon which processing element (e.g. CPU, GPU, DSPU, etc.) is being used. Due to continuous demand for better performance and reliability subject to energy consumption constraints, there is a recent trend to use complex heterogeneous and distributed platforms for the execution of workflows. Now, due to the increasing level of heterogeneity, maintaining the reliability requirement becomes challenging. 
\par Scheduling of jobs in multiprocessor systems has got two main variations. The first one is offline scheduling (also known as static scheduling), where the details of the tasks (e.g. the processing time, nature of the job, etc.) is known as apriori. The other one is online scheduling, where the details of the tasks come up dynamically. However, in the case of real-world safety-critical systems, it is natural that all the timings are known in advance so that the safety critical criteria can be maintained. Hence, in this study, also we consider static scheduling. In general, the scheduling of the tasks modeled as a DAG is NP-Complete \cite{bruno1976computer}. So the key research focus remains to develop efficient heuristic solution approaches.
\par Due to the inherent level of heterogeneity (both at the hardware and network level) in a multiprocessor system, maintaining reliability for the execution of the task graph is one of the main criteria. Also, as mentioned previously, there is an inherent dependency between two tasks present in the task graph. Hence the failure of one task may even lead to the failure of the entire workflow. So, to maintain highly reliable execution, it is important to maintain the reliability of all the individual tasks of the workflow. Another important criterion of a scheduling algorithm is energy consumption. In particular, this becomes an important criterion when the embedded system becomes a wireless device such as a robot or a drone. In the literature, there exists a significant amount of study on scheduling algorithms for workflows on heterogeneous multiprocessor and cloud systems \cite{zhao2010maximizing,tang2012hierarchical,chang2014battery,ma2018online,dogan2002matching}. Among the existing solution approaches HEFT Algorithm is a popular one and leads to minimum makespan value in many cases \cite{topcuoglu2002performance}. Also, several other scheduling algorithms consider other parameters such as energy consumption, such as Least Energy Cost (LEC) \cite{xie2017energy} and reliability, such as Maximum Reliability (MR) \cite{xie2017minimizing}. Some studies consider all three important parameters \emph{i.e.} makespan, energy, and reliability. \cite{zhao2010maximizing,tang2012hierarchical}.

\par In this paper, we make the following contributions:
\begin{itemize}
    \item We propose MERT, a non-fault tolerant scheduling algorithm that minimizes makespan and energy under a given reliability constraint by allocating task nodes to processors based on finish time, execution time, and energy consumption depending on the wait time of a task.
    \item For higher reliability constraints, we propose EAFTS a fault-tolerant scheduling algorithm that minimizes energy consumption under a given reliability constraint by allocating each task node to processors with the least energy consumption. 
    \item Mathematically, we show that both MERT and EAFTS can achieve any reliability constraint in non-fault tolerant and fault-tolerant settings, respectively.
    \item Given an allocation of task nodes to processors, we come up with a frequency allocation algorithm that minimizes energy consumption under a reliability constraint in the fault-tolerant setting.
    \item We perform extensive simulation experiments on real-world task graphs and compare our methods with the state-of-art approaches.
\end{itemize}

The rest of the paper is organized as follows. Section \ref{Sec:SM} describes the system's model. Section \ref{Sec:PD} describes the problems we try to solve. The proposed solutions have been described in Section \ref{Sec:PA}. Section \ref{Sec:EXP} contains the experimental evaluation of the proposed solution methodology. Finally, Section \ref{Sec:CFA} concludes our study and gives future research directions.

\section{System's Model} \label{Sec:SM}
In this section, we describe the system's model and describe our problem formally. For any positive integer $n$, $[n]$ denotes the set $\{1, 2, \ldots, n\}$. Initially, we start by describing a multiprocessor system.

\subsection{Multiprocessor Platform}
Our model comprises m different heterogeneous processors $\mathcal{P} = \{u_k: k \in [m]\}$, where each pair can communicate data with each other. Each processor runs in a range of frequencies. Let $f_{u_k, \min}$ and $f_{u_k, \max}$ denote the lower and upper bounds of frequency respectively for $u_k$. For simplicity, we normalize the maximum frequency of each processor to one  \emph{i.e.} $f_{u_k, \max} = 1.$ Rest of the processor-specific parameters will be discussed in the subsequent sub-sections.

\subsection{Workflow}
A scientific workflow can be modeled as a DAG denoted by $G(V, E)$, where $V$ denotes the set of vertices: $\{v_i : i \in [n]\}$ and each task node $v_i$, represents a task of the application. $E = \{(v_i, v_j) : v_i, v_j \in V\}$ denotes the set of edges of our task graph. A directed edge between task node $v_i$ to $v_j$ indicates a precedence relationship between $v_i$ and $v_j$ \emph{i.e.} $v_j$ cannot start unless it has received the necessary output data from $v_i$.
The weight of the edge $(v_i, v_j)$ denoted by $w(v_i, v_j)$ gives an idea of the communication time between the two tasks. Below, we present some terminology that will be used subsequently.
\begin{definition} \label{def1}
$pred(v_i)$ denotes the set of immediate predecessor nodes of task node $v_i$. Mathematically, $pred(v_i) = \{v_j : (v_j, v_i) \in E \} $
\end{definition}  
\begin{definition} \label{def2}
$succ(v_i)$ denotes the set of immediate successor nodes of task node $v_i$. Mathematically, $succ(v_i) = \{v_j : (v_i, v_j) \in E \} $
\end{definition}
\begin{definition} \label{def3}
$v_{entry}$ denotes the entry task. It is a redundant node added to have a proper notion of the first task. A directed edge with zero weight is added from $v_{entry}$ to every $v$ \emph{s.t.} $pred(v) = \emptyset$.
\end{definition}
\begin{definition} \label{def4}
$v_{exit}$ denotes the exit task. It is a redundant node added to have a proper notion of the last task. A directed edge with zero weight is added from every $v$ \emph{s.t.} $succ(v) = \emptyset$ to $v_{exit}$.
\end{definition}
For simplicity assume $v_{entry} = v_1$ and $v_{exit} = v_n$.

\subsection{Timing Metrics}
This sub-section discusses the start, execution, and finish times associated with executing a task on a processor. Let $T_s[v_i, u_k]$ and $T_f[v_i, u_k]$ denote the start and finish times of executing $v_i$ on $u_k$. Let $T_{exec}[v_i, u_k]$ denote the execution time of executing $v_i$ on $u_k$ at the maximum frequency, which can be determined through WCET analysis method (Worst Case Execution Time) \cite{xie2016resource}. Running the processor at a lower frequency $f$ increases the time proportionately. Once a task starts on a processor, it will run to completion \emph{i.e.} we do not assume pre-emption. This leads us to Equation No. \ref{eqn1}.

\begin{equation} \label{eqn1}
    T_f[v_i, u_k] = T_s[v_i, u_k] + \frac{T_{exec}[v_i, u_k]}{f}
\end{equation}

Due to the task dependencies, a task must transfer data to its successor nodes. Let the time required to communicate the data between tasks $v_i$ to $v_j$  be denoted by $T_{comm}[v_i, v_j]$. As mentioned earlier, the communication time depends on the weight of the edge connecting $v_i$ to $v_j$. Additionally, the communication time is negligible if both tasks are allocated to the same processor. This leads us to Equation No. \ref{eqn2}.

\begin{equation} \label{eqn2}
\scriptsize
T_{comm}[v_i, v_j]= 
\begin{cases}
    w(v_i, v_j),& \text{if $v_i, v_j$ are scheduled on different processors}\\
    0,              & \text{otherwise}
\end{cases}
\end{equation}

A task can start on a processor provided both the below conditions hold:
\begin{itemize}
    \item It has received output from all of its predecessor task nodes.
    \item The processor on which it is scheduled is not executing another task at that time.
\end{itemize}

This leads us to the below equation:
\begin{equation} \label{eqn3}
\scriptsize
T_s[v_i, u_k] = 
\begin{cases}
    \max \{avail[u_k], \max_{v_j \in pred(v_i)} \{T_f[v_j, u_{k'}] + T_{comm}[v_j, v_i] \} \},& \text{if $v_i \neq v_1$}\\
    0,              & \text{otherwise}
\end{cases}
\end{equation}

where $avail[u_k]$ denotes the earliest time that the processor $u_k$ is free after executing its previous task, and $v_j$ is assumed to be scheduled on processor $u_{k'}$.  
The makespan or schedule length is defined as the time when the last task finishes executing.

\subsection{Energy}
The power consumption of a processor consists of frequency-dependent dynamic consumption, frequency-independent dynamic consumption, and static consumption components \cite{huang2020dynamic}. The frequency dependant dynamic component is the dominant one and can be written as:

\begin{equation} \label{eqn4}
P = \gamma \cdot c \cdot v^2 \cdot f
\end{equation}

where $\gamma$ is the activity factor, $c$ is the loading capacitance, $v$ is the supply voltage, and $f$ is the operating frequency. Since $f \propto v$, we see that $P \propto f^{\alpha}$. $P_{u_k}$ denotes the sum of the frequency-independent dynamic consumption and static consumption components.

Let $P_{u_k}^{f}$ denote the overall power consumed by $u_k$ when operated at frequency $f$, given by the Equation No. \ref{eqn5}.
\begin{equation} \label{eqn5}
P_{u_k}^f = P_{u_k} + c_{u_k} \cdot f^{\alpha_{u_k}}
\end{equation}

where $c_{u_k}, \alpha_{u_k}$ denote the processor constants for $u_k$. \\

Energy consumed will be obtained by taking the product of power and execution time, as shown below.

\begin{equation} \label{eqn6}
E_{v_i, u_k}^{f} = P_{u_k}^{f} \cdot \frac{T_{exec}[v_i, u_k]}{f}
\end{equation}

where $E_{v_i, u_k}^{f}$ denotes the energy consumption when the task $v_i$ is executed on $u_k$ with frequency $f$.

\subsection{Reliability}
As in many other works \cite{huang2020dynamic}, \cite{xie2017energy} we study dominant transient faults related to processor frequency and can be modeled by the below exponential distribution: 

\begin{equation} \label{eqn7}
\lambda_{u_k}(f) = \lambda_{u_k} \cdot 10^\frac{d_{u_k}(1-f)}{1-f_{u_k, \min}}
\end{equation}

where $\lambda_{u_k}$ denotes the average number of faults per second at the maximum frequency and $d_{u_k}$ is a processor constant. \\

The reliability is modeled using a Poisson distribution, with parameter $\lambda_{u_k}(f)$. $\mathcal{R}_{v_i, u_k}^{f}$ denotes the reliability when the task $v_i$ is executed on the processor $u_k$ with frequency $f$ and this is given in Equation No. \ref{eqn8}.

\begin{equation} \label{eqn8}
\mathcal{R}_{v_i, u_k}^{f} = e^{-\lambda_{u_k}(f) \cdot \frac{T_{exec}[v_i, u_k]}{f}}
\end{equation}

\section{Problem Definition} \label{Sec:PD}
A schedule $\mathcal{S} = \{$\textbf{k},\textbf{f}$\}$ is defined by a task to processor(s) mapping vector (\textbf{k}) and a processor to frequency allocation vector (\textbf{f}). Assume $v_i$ is scheduled on processors: $\{u_{ij} : j \in [i_k]\}$ with frequencies $\{f_{ij} : j \in [i_k]\}$. Consequently, the vectors \textbf{k}, \textbf{f} are defined so that the $i^{th}$ element for each denotes the set of allocated processors and frequencies respectively. The reliability of task $v_i$ is given by the below equation \cite{xie2017energy}:

\begin{equation} \label{eqn9}
    \mathcal{R}_{v_i, \textbf{k}[i]}^{\textbf{f}[i]} = 1 - \prod_{j=1}^{i_k} (1 - \mathcal{R}_{v_i, u_{ij}}^{f_{ij}})
\end{equation}

The task graph executes successfully when all the tasks execute successfully. Assuming that failures are independent, the reliability of the task graph is given by the product of the reliability of all the tasks as in Equation No. \ref{eqn10}.

\begin{equation} \label{eqn10}
\mathcal{R}(\textbf{k}, \textbf{f}) = \prod_{i=1}^{n} \mathcal{R}_{v_i, \textbf{k}[i]}^{\textbf{f}[i]}
\end{equation}

The energy consumption for task $v_i$ is given by Equation No. \ref{eqn11}.

\begin{equation} \label{eqn11}
E_{v_i, \textbf{k}[i]}^{\textbf{f}[i]} = \sum_{j=1}^{i_k} E_{v_i, u_{ij}}^{f_{ij}}
\end{equation}

The total energy consumption is given by the sum of individual energy consumption for each task, as shown in Equation No. \ref{eqn12}.
\begin{equation} \label{eqn12}
    E(\textbf{k}, \textbf{f}) = \sum_{i=1}^{n} E_{v_i, \textbf{k}[i]}^{\textbf{f}[i]}
\end{equation}

From Equation No. \ref{eqn7} and \ref{eqn8}, we can easily observe that both $\lambda_{u_k}(f)$ and $\mathcal{R}_{v_i, u_k}^{f}$ are increasing functions of $f$. Hence for maximum reliability, each processor must run at $f_{u_k, \max} = 1$. If we assume that each task has to be allocated to one processor (non-fault tolerant case), then for each task $v_i$ there is a processor with minimum value $-\lambda_{u_k} \cdot T_{exec}[v_i, u_k]$ that gives maximum reliability value $\mathcal{R}_{v_i, \max}^{\text{non-fault tolerant}}$. The corresponding maximum reliability for the workflow is denoted by $\mathcal{R}_{\max}^{\text{non-fault tolerant}}$.

The maximum reliability denoted by $\mathcal{R}_{\max}^{\text{fault-tolerant}}$ in the fault-tolerant setting is when each task is scheduled to run on all processors \emph{i.e.} \textbf{k}$[i] = \mathcal{P}$ and \textbf{f}$[i] = \{f_{u_k, \max} : k \in [m]\}$. Let the corresponding reliability value for task $v_i$ be $\mathcal{R}_{v_i, \max}^{\text{fault-tolerant}}$\\
Given a reliability constraint $\mathcal{R}_{req}$, the following scenarios can occur:

\renewcommand{\theenumi}{\roman{enumi}}%
\begin{enumerate}
\item $R_{req} \leq \mathcal{R}_{\max}^{\text{non-fault tolerant}}$: In this case, we focus on non-fault tolerant scheduling by assigning each task to a unique processor. The problem we study here is to minimize makespan and energy consumption. We name this as the non-fault tolerant setting.
\item $\mathcal{R}_{\max}^{\text{non-fault tolerant}} < \mathcal{R}_{req} \leq  \mathcal{R}_{\max}^{\text{fault-tolerant}}$: In this case, we focus on fault-tolerant scheduling by assigning each task to multiple processors. The problem we study here is that of minimizing energy consumption. We name this the fault-tolerant setting.
\item $\mathcal{R}_{\max}^{\text{fault-tolerant}} < \mathcal{R}_{req}$: No possible allocation can satisfy the given constraint in this case.
\end{enumerate}

\section{Proposed Solution Approach} \label{Sec:PA}
We follow a list-based scheduling strategy for both problem settings, which consists of two phases: Task ordering and allocation phase. Then, the algorithm proceeds by scanning the tasks in order and assigning them to the appropriate processor(s) one by one, assuming that the processors run at their maximum frequency. Then, the obtained allocation is passed further into a frequency allocation algorithm to determine the operational frequencies for the processors.

\subsection{Task ordering}
The up-rank values, which are quite an effective way to order tasks to minimize makespan \cite{topcuoglu2002performance}, is used for ordering the tasks as defined in Equation No. \ref{eqn13}.

\begin{equation} \label{eqn13}
    \scriptsize
    urv(v_i)= 
\begin{cases}
    \frac{1}{m} \cdot \sum_{k=1}^{m}  T_{exec}[v_i, u_k],& \text{if }v_i = v_n \\
     \frac{1}{m} \cdot \sum_{k=1}^{m}  T_{exec}[v_i, u_k] + \\ \max_{v_j \in succ(v_i)} \{w(v_j, v_i) + urv(v_j)\}             & \text{otherwise}
\end{cases}
\end{equation}

The task order denoted by $\rho = \{v_{\rho(1)}, v_{\rho(2)}, \ldots,v_{\rho(n)}\}$ is found by sorting the tasks in decreasing order of their up-rank values.

\subsection{Reliability constraint}
Depending on the problem setting, we define $\mathcal{R}_{v_i, \max}$ and $\mathcal{R}_{\max}$ below:

\begin{equation} \label{eqn14}
    \scriptsize
    \mathcal{R}_{v_i, \max}= 
\begin{cases}
    \mathcal{R}_{v_i, \max}^{\text{non-fault tolerant}},& \text{if } R_{req} \leq \mathcal{R}_{\max}^{\text{non-fault tolerant}} \\
     \mathcal{R}_{v_i, \max}^{\text{fault-tolerant}}              & \text{otherwise}
\end{cases}
\end{equation}

\begin{equation} \label{eqn15}
    \scriptsize
    \mathcal{R}_{\max}= 
\begin{cases}
    \mathcal{R}_{\max}^{\text{non-fault tolerant}},& \text{if } R_{req} \leq \mathcal{R}_{\max}^{\text{non-fault tolerant}} \\
     \mathcal{R}_{\max}^{\text{fault-tolerant}}              & \text{otherwise}
\end{cases}
\end{equation}

Given a reliability constraint $\mathcal{R}_{req} \leq \mathcal{R}_{\max}$, we wish to allocate processor(s) to a task. For each task $v_i$, define the reliability bound denoted by $\mathcal{R}_{v_i, bound}$ as:

\begin{equation} \label{eqn16}
\mathcal{R}_{v_i, bound} = \mathcal{R}_{req} ^{\log_{\mathcal{R}_{\max}} \mathcal{R}_{v_i, \max}}
\end{equation}

As mentioned earlier, assume that each processor is running at it's maximum frequency.

\begin{lemma} \label{lem1}
If each task satisfies its bound value, the overall reliability constraint is satisfied. 
\end{lemma}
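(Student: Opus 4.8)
The plan is to reduce the claim to one logarithmic identity, after observing that both the achieved workflow reliability and the quantity $\mathcal{R}_{\max}$ share the same product-over-tasks structure. By Equation~\ref{eqn10} the workflow reliability factorizes as $\prod_{i=1}^n \mathcal{R}_{v_i, \textbf{k}[i]}^{\textbf{f}[i]}$, and by the definition of the per-task maxima the workflow maximum is $\mathcal{R}_{\max} = \prod_{i=1}^n \mathcal{R}_{v_i, \max}$ --- this is simply Equation~\ref{eqn10} evaluated at the configuration that maximizes each task individually. It is precisely this common factorization that makes the bounds of Equation~\ref{eqn16} compose correctly, so I would flag it first.

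Next I would assume the hypothesis: that every task meets its bound, i.e. $\mathcal{R}_{v_i, \textbf{k}[i]}^{\textbf{f}[i]} \geq \mathcal{R}_{v_i, bound}$ for all $i \in [n]$. Since each reliability is a strictly positive real, a product of positive reals is monotone in every factor, so
\[
\mathcal{R}(\textbf{k}, \textbf{f}) = \prod_{i=1}^n \mathcal{R}_{v_i, \textbf{k}[i]}^{\textbf{f}[i]} \;\geq\; \prod_{i=1}^n \mathcal{R}_{v_i, bound}.
\]
Substituting Equation~\ref{eqn16} and factoring the common base $\mathcal{R}_{req}$ out of the product rewrites the right-hand side as $\mathcal{R}_{req}$ raised to the summed exponent $\sum_{i=1}^n \log_{\mathcal{R}_{\max}} \mathcal{R}_{v_i, \max}$.

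The crux is then showing this exponent equals $1$: the sum of logarithms is the logarithm of the product, and the product $\prod_{i=1}^n \mathcal{R}_{v_i, \max}$ is exactly $\mathcal{R}_{\max}$ by the first step, so the exponent is $\log_{\mathcal{R}_{\max}} \mathcal{R}_{\max} = 1$. The lower bound therefore collapses to $\mathcal{R}_{req}^{1} = \mathcal{R}_{req}$, which is exactly the constraint to be satisfied.

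The only points needing care --- and the closest thing to an obstacle in an otherwise short argument --- are the well-definedness of the logarithm and the monotonicity step. Because reliabilities are probabilities lying strictly in $(0,1)$, the base $\mathcal{R}_{\max}$ is a legitimate logarithm base (positive and different from $1$), and the strict positivity of every factor is what licenses the product-monotonicity inequality; I would state these range conditions explicitly at the outset. Notably, the passage from the product of bounds to $\mathcal{R}_{req}$ is an exact equality rather than an inequality, so no power-monotonicity or convexity argument is needed, and the DAG's dependency structure enters only through the already-established Equation~\ref{eqn10}; hence no induction or case analysis on the graph is required.
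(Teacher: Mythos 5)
Your argument is correct and is essentially identical to the paper's own proof of Lemma~\ref{lem1}: multiply the per-task inequalities, substitute Equation~\ref{eqn16}, convert the sum of logarithms into $\log_{\mathcal{R}_{\max}}\prod_{i=1}^{n}\mathcal{R}_{v_i,\max}=\log_{\mathcal{R}_{\max}}\mathcal{R}_{\max}=1$, and conclude $\mathcal{R}(\textbf{k},\textbf{f})\geq\mathcal{R}_{req}$. Your explicit remarks on the positivity of the factors and the legitimacy of $\mathcal{R}_{\max}$ as a logarithm base are sensible additions but do not change the route.
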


\begin{proof}
Since each task satisfies its bound value, we have : \\
$\mathcal{R}_{v_i, \textbf{k}[i]}^{\textbf{f}[i]} \geq \mathcal{R}_{v_i, bound}, \forall i \in [n]$.

Multiplying the above inequalities, we get:
\begin{multline} \label{eqn17}
\mathcal{R}(\textbf{k}, \textbf{f}) = \prod_{i=1}^{n} \mathcal{R}_{v_i, \textbf{k}[i]}^{\textbf{f}[i]} \geq \prod_{i=1}^{n} \mathcal{R}_{v_i, bound} = \prod_{i=1}^{n} \mathcal{R}_{req} ^{\log_{\mathcal{R}_{\max}} \mathcal{R}_{v_i, \max}} \\ = \mathcal{R}_{req} ^ {\sum_{i=1}^{n} \log_{\mathcal{R}_{\max}} \mathcal{R}_{v_i, \max}} 
 = \mathcal{R}_{req} ^ {\log_{\mathcal{R}_{\max}} \prod_{i=1}^{n} \mathcal{R}_{v_i, \max}} = \mathcal{R}_{req} ^ {\log_{\mathcal{R}_{\max}} \mathcal{R}_{\max}} = \mathcal{R}_{req}
\end{multline}
\end{proof}

Assume that till now, we have finished allocating processors up to task $v_{\rho(i-1)}$, and now we wish to allocate processor(s) for task $v_{\rho(i)}$. Since tasks up to $v_{\rho(i-1)}$ have already been allocated, we know their reliability values. Further assume that the reliability values for tasks $v_{\rho(i+1)}, v_{\rho(i+2)} ,\ldots,v_{\rho(n)}$ are their bound values as given by Equation No. \ref{eqn16}. Then the target reliability for $v_{\rho(i)}$ denoted by $\mathcal{R}_{v_{\rho(i)}, target}$ must be \emph{s.t.} the overall reliability constraint for the task-graph is achieved as illustrated by the below equations using Equation No. \ref{eqn17} in addition:

\begin{equation} \label{eqn18}
\prod_{j=1}^{i-1} \mathcal{R}_{v_{\rho(j)}, \textbf{k}[\rho(j)]}^{\textbf{f}[\rho(j)]} \cdot \mathcal{R}_{v_{\rho(i)}, target} \cdot \prod_{j=i+1}^{n} \mathcal{R}_{v_{\rho(j)}, bound} = \mathcal{R}_{req}
\end{equation}

\begin{equation} \label{eqn19}
\scriptsize
\mathcal{R}_{v_{\rho(i)}, target} = \frac{\mathcal{R}_{req}}{\prod_{j=1}^{i-1} \mathcal{R}_{v_{\rho(j)}, \textbf{k}[\rho(j)]}^{\textbf{f}[\rho(j)]} \cdot \prod_{j=i+1}^{n} \mathcal{R}_{v_{\rho(j)}, bound}} = \frac{\prod_{j=1}^{i} \mathcal{R}_{v_{\rho(j)}, bound}}{\prod_{j=1}^{i-1} \mathcal{R}_{v_{\rho(j)}, \textbf{k}[\rho(j)]}^{\textbf{f}[\rho(j)]}}
\end{equation}

\begin{equation} \label{eqn20}
\scriptsize
\mathcal{R}_{req} \leq \mathcal{R}_{\max} \implies \mathcal{R}_{v_i, bound} = \mathcal{R}_{req} ^{\log_{\mathcal{R}_{\max}} R_{v_i, \max}} \leq \mathcal{R}_{\max} ^{\log_{\mathcal{R}_{\max}} \mathcal{R}_{v_i, \max}} = \mathcal{R}_{v_i, \max}.
\end{equation}


\begin{lemma} \label{lem2}
For each $i \in [n]$, the below equations hold: 
\renewcommand{\theenumi}{\roman{enumi}}%
\begin{enumerate}
\item $\mathcal{R}_{v_{\rho(i)}, target} \leq \mathcal{R}_{v_{\rho(i)}, bound}$
\item $\prod_{j=1}^{i} \mathcal{R}_{v_{\rho(j)}, \textbf{k}[\rho(j)]}^{\textbf{f}[\rho(j)]} \geq \prod_{j=1}^{i} \mathcal{R}_{v_{{\rho(j)}, bound}}$
\end{enumerate}
\end{lemma}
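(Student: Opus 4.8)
The plan is to establish both inequalities simultaneously by induction on $i$, because the two statements are intertwined: part (i) at index $i$ will follow from part (ii) at index $i-1$, and part (ii) at index $i$ will in turn follow from part (i) at index $i$. The engine of the whole argument is a single rewriting of Equation No.~\ref{eqn19}. Factoring the $i$-th bound term out of the numerator gives
\begin{equation*}
\mathcal{R}_{v_{\rho(i)}, target} = \mathcal{R}_{v_{\rho(i)}, bound} \cdot \frac{\prod_{j=1}^{i-1} \mathcal{R}_{v_{\rho(j)}, bound}}{\prod_{j=1}^{i-1} \mathcal{R}_{v_{\rho(j)}, \textbf{k}[\rho(j)]}^{\textbf{f}[\rho(j)]}}.
\end{equation*}
Since every reliability factor lies in $(0,1]$, part (i) at index $i$ is equivalent to the trailing ratio being at most $1$, that is, to $\prod_{j=1}^{i-1} \mathcal{R}_{v_{\rho(j)}, \textbf{k}[\rho(j)]}^{\textbf{f}[\rho(j)]} \geq \prod_{j=1}^{i-1} \mathcal{R}_{v_{\rho(j)}, bound}$, which is exactly part (ii) at index $i-1$.

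For the base case $i=1$ the denominator product in Equation No.~\ref{eqn19} is empty, so $\mathcal{R}_{v_{\rho(1)}, target} = \mathcal{R}_{v_{\rho(1)}, bound}$, giving part (i) with equality; part (ii) then follows because the allocation step for $v_{\rho(1)}$ achieves reliability at least $\mathcal{R}_{v_{\rho(1)}, target} = \mathcal{R}_{v_{\rho(1)}, bound}$. For the inductive step I would proceed in two moves. First, assuming part (ii) holds at $i-1$, the equivalence above yields part (i) at $i$ immediately. Second, to obtain part (ii) at $i$, I invoke the defining property of the allocation step, namely that it assigns processor(s) to $v_{\rho(i)}$ so that $\mathcal{R}_{v_{\rho(i)}, \textbf{k}[\rho(i)]}^{\textbf{f}[\rho(i)]} \geq \mathcal{R}_{v_{\rho(i)}, target}$; multiplying this inequality by $\prod_{j=1}^{i-1} \mathcal{R}_{v_{\rho(j)}, \textbf{k}[\rho(j)]}^{\textbf{f}[\rho(j)]}$ and substituting Equation No.~\ref{eqn19} collapses the right-hand side to $\prod_{j=1}^{i} \mathcal{R}_{v_{\rho(j)}, bound}$, which is part (ii) at $i$.

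The main obstacle is not algebraic but logical: the move ``the allocation achieves at least the target'' is legitimate only if the target is actually attainable, i.e.\ if $\mathcal{R}_{v_{\rho(i)}, target}$ does not exceed the maximum reliability $\mathcal{R}_{v_{\rho(i)}, \max}$ realizable for that task. This is precisely where part (i) earns its keep: combining part (i) at $i$ with Equation No.~\ref{eqn20} gives $\mathcal{R}_{v_{\rho(i)}, target} \leq \mathcal{R}_{v_{\rho(i)}, bound} \leq \mathcal{R}_{v_{\rho(i)}, \max}$, certifying feasibility \emph{before} the algorithmic guarantee is used. Thus the apparent circularity --- part (ii) needing the allocation to meet the target, and feasibility of the target needing part (i) --- is broken by the ordering of implications: part (ii) at $i-1$ implies part (i) at $i$, which certifies feasibility, which then implies part (ii) at $i$. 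In writing this up I would be careful to state the allocation's ``meet-the-target'' property explicitly as the single external fact the lemma relies upon.
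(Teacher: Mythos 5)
Your proposal is correct and follows essentially the same route as the paper: induction on $i$, with part (ii) at $i-1$ yielding part (i) at $i$ via the rewriting of Equation No.~\ref{eqn19}, feasibility of the target certified through Equation No.~\ref{eqn20}, and the allocation's meet-the-target guarantee then delivering part (ii) at $i$. Your explicit flagging of the feasibility step (target $\leq$ bound $\leq$ max) before invoking the algorithmic guarantee is a slightly more careful articulation of what the paper does tersely, but it is not a different argument.
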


\begin{proof}
The proof proceeds by using induction on i. \\
\textbf{Base Case:} For $i= 1$ using Equation No. \ref{eqn19} we get: $\mathcal{R}_{v_{\rho(i)}, target} = \mathcal{R}_{v_{\rho(i)}, bound}$ which completes (i). By Equation No. \ref{eqn20}, $\mathcal{R}_{v_i, bound} \leq \mathcal{R}_{v_i, \max} \implies \mathcal{R}_{v_{\rho(i)}, target} \leq \mathcal{R}_{v_i, \max}$. Hence there exists a set of processor(s) that can achieve at least $\mathcal{R}_{v_{\rho(i)}, target}$. We choose a set of processors $\{u_{ij} : j \in [i_k]\}$ \emph{s.t.} $\mathcal{R}_{v_{\rho(i)}, \textbf{k}[\rho(i)]}^{\textbf{f}[\rho(i)]} \geq \mathcal{R}_{v_{\rho(i)}, target} = \mathcal{R}_{v_{\rho(i)}, bound}$, completing (ii). \\

\textbf{Inductive Step:} For $i \geq 2$, by our inductive hypothesis we have: $\prod_{j=1}^{i-1} \mathcal{R}_{v_{\rho(j)}, \textbf{k}[\rho(j)]}^{\textbf{f}[\rho(j)]} \geq \prod_{j=1}^{i-1} \mathcal{R}_{v_{{\rho(j)}, bound}}$. Using Equation No. \ref{eqn19}, \\
$\mathcal{R}_{v_{\rho(i)}, target} = \frac{\prod_{j=1}^{i} \mathcal{R}_{v_{\rho(j)}, bound}}{\prod_{j=1}^{i-1} \mathcal{R}_{v_{\rho(j)}, \textbf{k}[\rho(j)]}^{\textbf{f}[\rho(j)]}} \leq \frac{\prod_{j=1}^{i} \mathcal{R}_{v_{\rho(j)}, bound}}{\prod_{j=1}^{i-1} \mathcal{R}_{v_{{\rho(j)}, bound}}} = \mathcal{R}_{v_{\rho(i)}, bound}$ \\
This completes (i). As seen above, since $\mathcal{R}_{v_{\rho(i)}, target} \leq \mathcal{R}_{v_{\rho(i)}, bound}$, hence there exists a set of processor(s) which will satisfy the reliability target. By choosing any such set $\{u_{ij} : j \in [i_k]\}$, we get: \\ $\mathcal{R}_{v_{\rho(i)}, \textbf{k}[\rho(i)]}^{\textbf{f}[\rho(i)]} \geq \mathcal{R}_{v_{\rho(i)}, target} = \frac{\prod_{j=1}^{i} \mathcal{R}_{v_{\rho(j)}, bound}}{\prod_{j=1}^{i-1} \mathcal{R}_{v_{\rho(j)}, \textbf{k}[\rho(j)]}^{\textbf{f}[\rho(j)]}} \\ \implies \mathcal{R}_{v_{\rho(i)}, \textbf{k}[\rho(i)]}^{\textbf{f}[\rho(i)]} \cdot \prod_{j=1}^{i-1} \mathcal{R}_{v_{\rho(j)}, \textbf{k}[\rho(j)]}^{\textbf{f}[\rho(j)]} \geq \prod_{j=1}^{i} \mathcal{R}_{v_{\rho(j)}, bound} \\ \implies \prod_{j=1}^{i} \mathcal{R}_{v_{\rho(j)}, \textbf{k}[\rho(j)]}^{\textbf{f}[\rho(j)]} \geq \prod_{j=1}^{i} \mathcal{R}_{v_{{\rho(j)}, bound}}$, completing (ii).
\end{proof}

\begin{theorem} \label{thm1}
Any given reliability constraint $\mathcal{R}_{req} \leq \mathcal{R_{\max}}$ can be satisfied. 
\begin{proof}
Putting i = n in Equation (ii) of Lemma \ref{lem2}, gives us:\\
$\mathcal{R}(\textbf{k}, \textbf{f}) = \prod_{j=1}^{n} \mathcal{R}_{v_{\rho(j)}, \textbf{k}[\rho(j)]}^{\textbf{f}[\rho(j)]} \geq \prod_{j=1}^{n} \mathcal{R}_{v_{{\rho(j)}, bound}} = \mathcal{R}_{req}$ (From Equation No. \ref{eqn17}). This proves that our strategy can achieve the given reliability constraint.
\end{proof}
\end{theorem}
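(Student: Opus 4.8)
The plan is to treat this statement as an immediate corollary of Lemma~\ref{lem2}, so that essentially all of the work has already been discharged in establishing the two invariants there. Concretely, I would instantiate part (ii) of Lemma~\ref{lem2} at $i = n$, which yields
\[
\mathcal{R}(\textbf{k}, \textbf{f}) = \prod_{j=1}^{n} \mathcal{R}_{v_{\rho(j)}, \textbf{k}[\rho(j)]}^{\textbf{f}[\rho(j)]} \geq \prod_{j=1}^{n} \mathcal{R}_{v_{\rho(j)}, bound}.
\]
The only remaining task for the theorem itself is to identify the right-hand product with $\mathcal{R}_{req}$, so no fresh estimation beyond the single substitution $i = n$ is needed.

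Next I would observe that the permutation $\rho$ merely reorders the tasks, so $\prod_{j=1}^{n} \mathcal{R}_{v_{\rho(j)}, bound} = \prod_{i=1}^{n} \mathcal{R}_{v_i, bound}$, and this latter product was already evaluated inside the proof of Lemma~\ref{lem1} (Equation~\ref{eqn17}) to equal exactly $\mathcal{R}_{req}$. The key step reused there is the identity $\prod_{i=1}^{n} \mathcal{R}_{v_i, \max} = \mathcal{R}_{\max}$ (which itself follows from Equation~\ref{eqn10} applied at maximum frequency) together with the logarithm collapsing the product of bounds down to the single exponent $\log_{\mathcal{R}_{\max}} \mathcal{R}_{\max} = 1$. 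Chaining the displayed inequality with this identity gives $\mathcal{R}(\textbf{k}, \textbf{f}) \geq \mathcal{R}_{req}$, which is precisely the claim.

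The part that genuinely needs care — and what makes the theorem true rather than vacuous — is feasibility at every allocation step, but this is exactly what Lemma~\ref{lem2} already guarantees. The hypothesis $\mathcal{R}_{req} \leq \mathcal{R}_{\max}$ feeds into Equation~\ref{eqn20} to force $\mathcal{R}_{v_i, bound} \leq \mathcal{R}_{v_i, \max}$; combined with part (i) of Lemma~\ref{lem2}, which bounds $\mathcal{R}_{v_{\rho(i)}, target} \leq \mathcal{R}_{v_{\rho(i)}, bound}$, one obtains $\mathcal{R}_{v_{\rho(i)}, target} \leq \mathcal{R}_{v_i, \max}$, so a processor set meeting the target always exists at each step. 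I would therefore emphasize in the write-up that the theorem is not an independent argument but the terminal case of the induction, and that the substantive obstacle lives in Lemma~\ref{lem2}: simultaneously maintaining invariant (ii) while only ever demanding the achievable target $\mathcal{R}_{v_{\rho(i)}, target}$. For the final statement, the proof is then a one-line consequence.
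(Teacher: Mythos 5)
Your proposal is correct and follows exactly the paper's own argument: instantiate part (ii) of Lemma~\ref{lem2} at $i=n$ and identify $\prod_{j=1}^{n}\mathcal{R}_{v_{\rho(j)},bound}$ with $\mathcal{R}_{req}$ via Equation~\ref{eqn17}. The additional remarks on reindexing under $\rho$ and on feasibility being the real content of Lemma~\ref{lem2} are accurate but not new steps.
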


If $\mathcal{R}_{v_i, \max}, \mathcal{R}_{\max}  \approx $ 1 (especially in the fault-tolerant setting), computing the logarithm in Equation No. \ref{eqn16} becomes infeasible, so we set the bound value to: $\mathcal{R}_{v_i, bound} = \mathcal{R}_{req}^{1/n}$.  

\subsection{Processor allocation}
\begin{itemize}
\item \textbf{Non-Fault Tolerant Setting:} 
We propose an allocation policy based on the finish time, energy consumption, and execution time of the tasks while satisfying the reliability target. After a task has been allocated to a processor, its time requirement is divided into two parts:

\begin{itemize}
\item Idle Time: In this case, the task is waiting to receive the output from its predecessor task(s).
\item Execution Time: In this case, the task is getting executed on the processor.
\end{itemize}
Combining the above two times, we get the wait time of a task defined using Equation No. \ref{eqn21}.

\begin{equation} \label{eqn21}
\scriptsize
    W(v_i)= 
\begin{cases}
    \frac{1}{m} \cdot \sum_{k=1}^{m}  T_{exec}[v_i, u_k] & \text{if }v_i = v_1 \\
     \max_{v_j \in pred(v_i)} \{w(v_j, v_i)\} + \frac{1}{m} \cdot \sum_{k=1}^{m}  T_{exec}[v_i, u_k]             & \text{otherwise}
\end{cases}
\end{equation}

The first part of Equation No. \ref{eqn21} gives an idea of how much time a task has to sit idle on a processor till output from all its predecessor(s) has reached it, while the second part is the average execution time over all processors. Tasks with longer wait times are prone to be on their allocated processor for longer. Thus, it is in favor of a shorter makespan that such tasks be allocated to processors with lower finish times. We sort the tasks based on decreasing order of their wait times in an array $W$. Then, we allocate the first $\ell(0 \leq \ell \leq n)$ tasks based on a weighted normalized linear combination of the task's finish time with weight $\alpha$ and its execution time with weight $(1-\alpha)$. We include execution time in our metric because a lower execution time reduces finish time and energy consumption, as seen in Equation No. \ref{eqn1}, \ref{eqn6} and increases reliability as seen from Equation No. \ref{eqn8}. The remaining tasks are allocated based on their energy consumption. The normalizing of values is done using Min-Max Normalization. Our final allocation algorithm MERT is presented in Algorithm 1.

\begin{algorithm}[!htb]
\caption{MERT scheduling algorithm}
	\KwIn{Task graph, processor parameters, reliability constraint}
    \KwOut{Task-processor Mapping Vector}
	$\text{Compute the ordering } \rho \text{ using Equation No. }  \ref{eqn13}$. \;
	$\text{Calculate } W(v_i) \text{ according to Equation No. } \ref{eqn21}$\; 
	
Sort the tasks based on the waiting time value and break ties based on the ordering in $\rho$\;	
	\For{$i \in [n]$}{
		Calculate the reliability target $\mathcal{R}_{v_{\rho(i)}, target}$ according to Equation No. \ref{eqn19}\;
		Compute the set $\delta$ of processors satisfying the reliability target \emph{i.e.} $\delta = \{u_k :\mathcal{R}_{v_{\rho(i)}, u_k} ^ {f_{u_k, \max}} \geq \mathcal{R}_{v_{\rho(i)}, target}\}$\;
	    \If{$v_{\rho(i)} \in  [W[1], W[\ell]]$} {
	        Assign $v_{\rho(i)}$ to processor satisfying $ u^{*}_{k} \longleftarrow \underset{u_k \in \delta}{\min} \ \{ \alpha \cdot \frac{T_f[v_{\rho(i)}, u_k] - T_f[v_{\rho(i)}, \min]}{T_f[v_{\rho(i)}, \max] - T_f[v_{\rho(i)}, \min]} + (1-\alpha) \cdot \frac{T_{exec}[v_{\rho(i)}, u_k] - T_{exec}[v_{\rho(i)}, \min]}{T_{exec}[v_{\rho(i)}, \max] - T_{exec}[v_{\rho(i)}, \min]} \}$ \;
	    }
	    \Else {
	         Assign $v_{\rho(i)}$ to processor satisfying $u^{*}_{k} \longleftarrow  \underset{u_k \in \delta}{\min}   \ E_{v_{\rho(i)}, u_k}^{f_{u_k, \max}} $ \;
	    }
	    $\textbf{k}[\rho(i)] \longleftarrow \ u^{*}_{k}$\;
	}
	\KwRet $\textbf{k}$\;
	\label{Algo:1}
\end{algorithm}

In Algorithm 1, $T_f[v_{\rho(i)}, \min], T_f[v_{\rho(i)}, \max], T_{exec}[v_{\rho(i)}, \min]$, $T_{exec}[v_{\rho(i)}, \max]$ denote the minimum and maximum values of finish and execution times respectively of executing $v_{\rho(i)}$ over all processors.
\par \emph{Frequency Allocation:} We use the SOEA algorithm, which gives optimum energy consumption for a given reliability constraint and processor allocation \cite{huang2020dynamic}.
\\
\par \emph{Complexity Analysis}
\begin{itemize}
    \item For calculating the bound values for each task, we need to find $\mathcal{R}_{v_i, \max}$ over all processors, as seen from Equation No. \ref{eqn16}, taking $\mathcal{O}(m)$ time. Hence for all tasks, a total of $\mathcal{O}(m \cdot n)$ time is required.
    \item Calculating the up-rank values and wait times for a task $v_i$ takes $\mathcal{O}(m+n)$ time as seen from Equation No. \ref{eqn13}, \ref{eqn21}. Hence, the total time required for all the tasks will be $\mathcal{O}(n \cdot (m + n))$. Then, sorting takes additional $\mathcal{O}(n \cdot \log n)$.
    \item For each task in $\rho$, calculating the target value takes $\mathcal{O}(1)$ time if we updating the values of product of $\mathcal{R}_{\rho(i), bound}$ and $\mathcal{R}_{v_{\rho(i)}, \textbf{k}[\rho(i)]}^{\textbf{f}[\rho(i)]}$ in Equation No. \ref{eqn19} at each step. Finding the set of processors $\delta$ for which the reliability target is satisfied takes $\mathcal{O}(m)$ time. For each processor, finding the finish time takes $\mathcal{O}(n)$ time, as seen from Equation No. \ref{eqn1}, \ref{eqn3}. Determining the execution time and energy takes $\mathcal{O}(1)$ time. The total time taken over all tasks and processors will be $\mathcal{O}(n^2 \cdot m)$.
    \item The complexity of SOEA is $\mathcal{O}(\log(L_o/\epsilon) \cdot m \cdot \log(L_f/\epsilon) \cdot n)$, where $L_o, L_f$ relate to the frequency range under consideration and $\epsilon$ denotes the accuracy of the reliability constraint \cite{huang2020dynamic}.
    \item Finally, by varying the constants $\ell$, $\alpha$ the overall time complexity of MERT+SOEA is $\mathcal{O}(L_{\alpha} \cdot n^2 \cdot m (n + \log(L_o/\epsilon) \cdot \log(L_f/\epsilon)))$, where $L_{\alpha}$ is the number of iterations over $\alpha$. 
\end{itemize}

\item \textbf{Fault-Tolerant Setting:} . We adopt the strategy of sorting the processors in increasing order of their energy consumption for task $v_i$. And then, pick the processors one-by-one until the reliability target is achieved. Our algorithm EAFTS is presented in Algorithm 2.

\begin{algorithm}[!htb]
\caption{EAFTS scheduling algorithm}
	\KwIn{Task graph, processor parameters, reliability constraint}
    \KwOut{Task-processor Mapping Vector}
	$\text{Compute the ordering } \rho \text{ using Equation No. }  \ref{eqn13}$. \;
	\For{$i \in [n]$}{
		Calculate the reliability target $\mathcal{R}_{v_{\rho(i)}, target}$ according to Equation No. \ref{eqn19}\;
		Sort the processors $u_k$ in increasing order of $E_{v_{\rho(i)}, u_k} ^ {f_{u_k, \max}}$ in list $L = \{u_{ij} : j \in [m]\}$, break ties by giving priority to processor with higher reliability value\;
		$\textbf{k}[\rho(i)] \longleftarrow \{\}$\;
		\For{$j \in [m]$}{
		    Add $L[j]$ to $\textbf{k}[\rho(i)]$\;
		    Calculate the reliability $\mathcal{R}_{v_{\rho(i)}, \textbf{k}[\rho(i)]}^{\textbf{f}[\rho(i)]}$ according to Equation No. \ref{eqn9}\;
		    \If{$\mathcal{R}_{v_{\rho(i)}, \textbf{k}[\rho(i)]}^{\textbf{f}[\rho(i)]} \geq \mathcal{R}_{v_{\rho(i)}, target}$}{
		        \textbf{break} \;
		    }
		}
	}
	\KwRet $\textbf{k}$\;
	\label{Algo:2}
\end{algorithm}

\par \emph{Frequency Allocation:} Given a processor allocation and reliability constraint, we proceed to find the operational frequencies for each processor in this sub-section. For a processor $u_{ij}$ running at some frequency $f_{ij}$, we try to find the frequency $f_{ij}^{'}$ \emph{s.t.} the overall reliability is brought down to exactly $R_{req}$ by keeping all the other frequencies the same. Next, we find the decrease in energy obtained by doing so. The processor having the maximum decrease in energy is selected, and the corresponding frequency is set as the mean of $f_{ij}$ and $f_{ij}^{'}$. The algorithm termed Frequency Allocation (FA) is presented in Algorithm 3, where $\zeta$ denotes the accuracy level set to $10^{-5}$. \\ \\

Observe that by setting $\mathcal{R}(\textbf{k}, \textbf{f}) = \mathcal{R}_{req}$ in Equation No. \ref{eqn10} and subsequent rearranging gives us:
\begin{equation} \label{eqn22}
\scriptsize
\mathcal{R}_{v_i, u_j}^{f_{ij}^{'}} = 1 - \frac{1 - \frac{R_{req}}{\prod_{t \in [n], t \neq i} \mathcal{R}_{v_t, \textbf{k}[t]}^{\textbf{f}[t]}}}{\prod_{r \in [i_k], r \neq j}(1-\mathcal{R}_{v_i, u_{ir}}^{f_{ir}})} = h(\mathcal{R}_{req})
\end{equation}

Using Equation No. \ref{eqn8}, we get:
$\mathcal{R}_{v_i, u_{ij}}^{f_{ij}^{'}} = e^{-\lambda_{u_{ij}}(f_{ij}^{'}) \cdot \frac{T_{exec}[v_i, u_{ij}]}{f_{ij}^{'}}} = h(\mathcal{R}_{req}) \implies \frac{\lambda_{u_{ij}}(f_{ij}^{'})}{f_{ij}^{'}} = -\frac{\log (h(\mathcal{R}_{req}))}{T_{exec}[v_i, u_{ij}]}$ \\

For a given processor frequency pair $(u_{ij}, f_{ij})$, observe that the right side of the above equation is a constant and  $\lambda/f$ is a decreasing function of $f$ as seen from Equation No. \ref{eqn7}, so the new frequency $f_{ij}^{'}$ can be obtained by applying a binary search over the frequency range : $[f_{u_{ij}, \min}, f_{u_{ij}, \max}]$. \\

\begin{algorithm}[!htb]
\caption{FA algorithm}
	\KwIn{Task graph, processor parameters, processor allocation (\textbf{k}), reliability constraint}
    \KwOut{Frequency Allocation Vector}
	Initialize \textbf{f} to maximum frequency values\;
	\While{$\lvert \mathcal{R}(\textbf{k}, \textbf{f}) - \mathcal{R}_{req} \rvert \geq \zeta$} {
    	\For{$i \in [n], j \in [i_k]$}{
    	    Calculate $f_{ij}^{'}$ \emph{s.t.} $\mathcal{R}(\textbf{k}, \textbf{f}) = \mathcal{R}_{req}$ using Equation No. \ref{eqn22} and applying binary search\; 
    	    Calculate the corresponding energy $E_{v_i, u_{ij}}^{f_{ij}^{'}}$  using Equation No. \ref{eqn6} \;
    	}
    	Find the task-processor pair$(u_{ij}, f_{ij})$ with least energy\;
    	Set $f_{ij} \longleftarrow \frac{(f_{ij} + f_{ij}^{'})}{2}$ \;
    }
	\KwRet $\textbf{f}$\;
	\label{Algo:3}
\end{algorithm}

\par \emph{Complexity Analysis: }
\begin{itemize}
    \item EAFTS:
    \begin{itemize}
        \item As before, the time for calculating up-rank values and reliability bound values take $\mathcal{O}(n \cdot (m + n))$ time.
        \item For each task, ordering the processors in increasing order of energy values takes $\mathcal{O}(m \cdot \log m)$ time. After that, at most $\mathcal{O}(m)$ time is taken to satisfy the reliability goal of the task. Hence, overall tasks $\mathcal{O}(n \cdot m \cdot \log m)$ time is needed. Overall tasks, $\mathcal{O}(n \cdot (n + m \cdot \log m))$ time is required.
    \end{itemize}
    \item FA:
    \begin{itemize}
        \item In each iteration of the while loop in Line No. 2 of Algorithm 3, at least one processor-frequency pair is updated. Since there are n tasks and each has at most m processors allocated to it, the while loop runs at most $\mathcal{O}(\lvert  F \rvert \cdot m \cdot n)$ times, where $\lvert F \rvert$ denotes the frequency range under consideration.
        \item For each processor-frequency pair (for loop at Line No. 3), calculating the frequency value $f_{ij}^{'}$ takes at most $\mathcal{O}(\log \lvert F \rvert)$ time (line No. 4). Hence, overall task-processor pairs a total of $\mathcal{O}(\log \lvert F \rvert \cdot m \cdot n)$ time is taken.
        \item Overall $\mathcal{O}(\lvert F \rvert \cdot \log \lvert F \rvert \cdot m^2 \cdot n^2)$ time is required.
    \end{itemize}
\end{itemize}
Hence, EAFTS+FA has a time complexity of $\mathcal{O}(n \cdot (n + m \cdot \log m) + \lvert F \rvert \cdot \log \lvert F \rvert \cdot m^2 \cdot n^2)$.
\end{itemize}

\section{Experimental Evaluation} \label{Sec:EXP}
In this section, we describe the experimental evaluation of the proposed solution approach. Initially, we describe the task graphs used in this study.
\subsection{Task Graphs}
We perform the experimental evaluation of our proposed scheduling algorithm on the below workflows used widely in literature for comparison \cite{huang2020dynamic}, \cite{xie2017energy}.
\begin{itemize}
    \item \textbf{Fast Fourier Transform (FFT)}: FFT applications exhibit a high degree of parallelism. For any given positive integer $\rho$, the number of nodes in the task graph can be given by the following equation: $n=(2+\rho) \times 2^{\rho} - 1$ \cite{topcuoglu2002performance}. We consider task graphs with $\rho = 5$ or $n = 223$ \cite{huang2020dynamic}.
    \item \textbf{Gaussian Elimination (GE)}: Compared to FFT, GE applications exhibit a low degree of parallelism. For a given positive integer $\rho$, the number of tasks can be given by the following equation: $n = (\rho^2 + \rho -2)/2$. We consider task graphs with $\rho = 20$ giving $n = 209$ \cite{huang2020dynamic}.
\end{itemize}

\subsection{Methods Compared}
We compare the performance of the proposed scheduling algorithm with the following existing solutions from the literature:
\begin{itemize}
    \item \textbf{Non-Fault Tolerant Setting:} 
    \begin{enumerate}
        \item \textbf{Out Degree Scheduling (ODS)} \cite{huang2020dynamic}: This is one of the state-of-the-art algorithms for workflow scheduling in the multiprocessor system that takes makespan, energy consumption, and reliability into consideration. ODS has the same time complexity as MERT.
        
        \item \textbf{Energy Efficient Scheduling with Reliability Goal (ESRG), } \cite{xie2017energy}: This is another state-of-the-art algorithm that minimizes energy consumption under reliability constraint. It determines both processor and frequency allocation together. Many existing studies have also compared their results with this method.
        
        \item \textbf{Minimizing Resource Consumption Cost with Reliability Goal (MRCRG)} \cite{xie2016resource}: This is another state-of-the-art algorithm that minimizes resource cost under reliability constraint. As in \cite{xie2017energy}, we can modify the algorithm to reduce energy consumption instead of resource cost and name the algorithm Minimizing Energy Consumption with Reliability Goal (MECRG).
        
        \item \textbf{Maximum Reliability (MR)} \cite{huang2020dynamic}, \cite{xie2017minimizing}: As the name suggests, this algorithm assigns each task to the processor that leads to the maximum reliability($\mathcal{R}_{v_i, \max}^{\text{non-fault tolerant}}$) without considering makespan or energy consumption.
    \end{enumerate}
    \item \textbf{Fault Tolerant Setting:} 
    \begin{enumerate}
        \item \textbf{Energy-Efficient Fault-Tolerant Scheduling (EFSRG)} \cite{xie2017energy}: This is a state-of-the-art algorithm that minimizes energy consumption under reliability constraint considering fault tolerance. It determines both processor and frequency allocation together.
        \item \textbf{MaxRe} \cite{zhao2010fault}: This is a state-of-the-art algorithm that minimizes the number of active replicas under a reliability constraint by assuming each task should have the same reliability target = $\mathcal{R}_{req}^{1/n}$.
        \item \textbf{RR} \cite{zhao2013reliable}: This is a slight improvement of MaxRe that minimizes the number of active replicas under a reliability constraint by assuming all the unassigned tasks have the same reliability target = $\mathcal{R}_{req}^{1/n}$.
    \end{enumerate}
\end{itemize}

\subsection{Experimental Set Up}
We implement the proposed solution approach on a workbench system with i5 $10^{\text{th}}$ generation processor and 32GB memory in Python 3.8.10. Processor parameters are set to reflect the real-world characteristics, such as Intel Mobile Pentium III and ARM Cortex-A9 as in \cite{huang2020dynamic}. We simulate a 32 fully connected processor system with parameters chosen randomly in the ranges: $P_{u_k} \in [0.4, 0.8]$, $c_{u_k} \in [0.8, 1.3]$, $f \in [0.3, 1.0]$, $ \alpha_{u_k} \in [2.7, 3.0]$, $\lambda_{u_k} \in [10^{-6}, 10^{-5}]$, $d_{u_k} \in [1, 3]$. $\epsilon$ in SOEA is set to $10^{-5}$, and frequencies are adjusted in steps of  0.0001 \cite{huang2020dynamic}. The experiment is run for 30 instances for each constraint, and the average values are reported. The reliability constraint is set to $\eta \cdot \mathcal{R}_{\max}^{\text{non-fault tolerant}}$, where $\eta$ is set from 0.9 to 0.99 in steps of 0.01 and from 0.991 to 0.999 in steps of 0.001 in the non-fault tolerant setting. In the fault-tolerant setting, we set $\eta$ from 1.001 in steps of 0.001 till $\eta \leq \mathcal{R}_{\max}^{\text{fault-tolerant}}/\mathcal{R}_{\max}^{\text{non-fault tolerant}}$. We choose the maximum value of $\eta$, denoted by $\eta_{\max}$, over all the 30 instances. Then, we run the experiment again till we find 30 new instances each of which satisfy $\eta_{\max} \leq \mathcal{R}_{\max}^{\text{fault-tolerant}}/\mathcal{R}_{\max}^{\text{non-fault tolerant}}$ and take the average values over them.
\par The value of $\alpha$ in MERT can be determined empirically, but for simplicity, we set it from 0.0 to 1.0 in steps of 0.1. In general, ODS does not satisfy high-reliability constraints, so MR replaces it in such cases. We use SOEA on top of ODS, MECRG, MR, and MERT for frequency allocation. For a fair comparison of the algorithms, we do the following. Out of ESRG, MRCRG+SOEA, and MR+SOEA, we choose the one with the least energy consumption $E_{best}$. Then out of the family of solutions generated by ODS+SOEA and MERT+SOEA, we choose the ones with the best makespan and having energy consumption $\leq E_{best}$. In general, we expect that MERT+SOEA gives the best makespan while also having the least energy consumption. For the fault-tolerant setting, we use FA on top of MaxRe, RR, and EAFTS for frequency allocation. In this case, all the algorithms are compared in terms of their energy consumption.

\subsection{Experimental Observations and Discussions}

\begin{figure*}[!ht]
\centering
\begin{tabular}{cccc}
\includegraphics[width=0.26\textwidth]{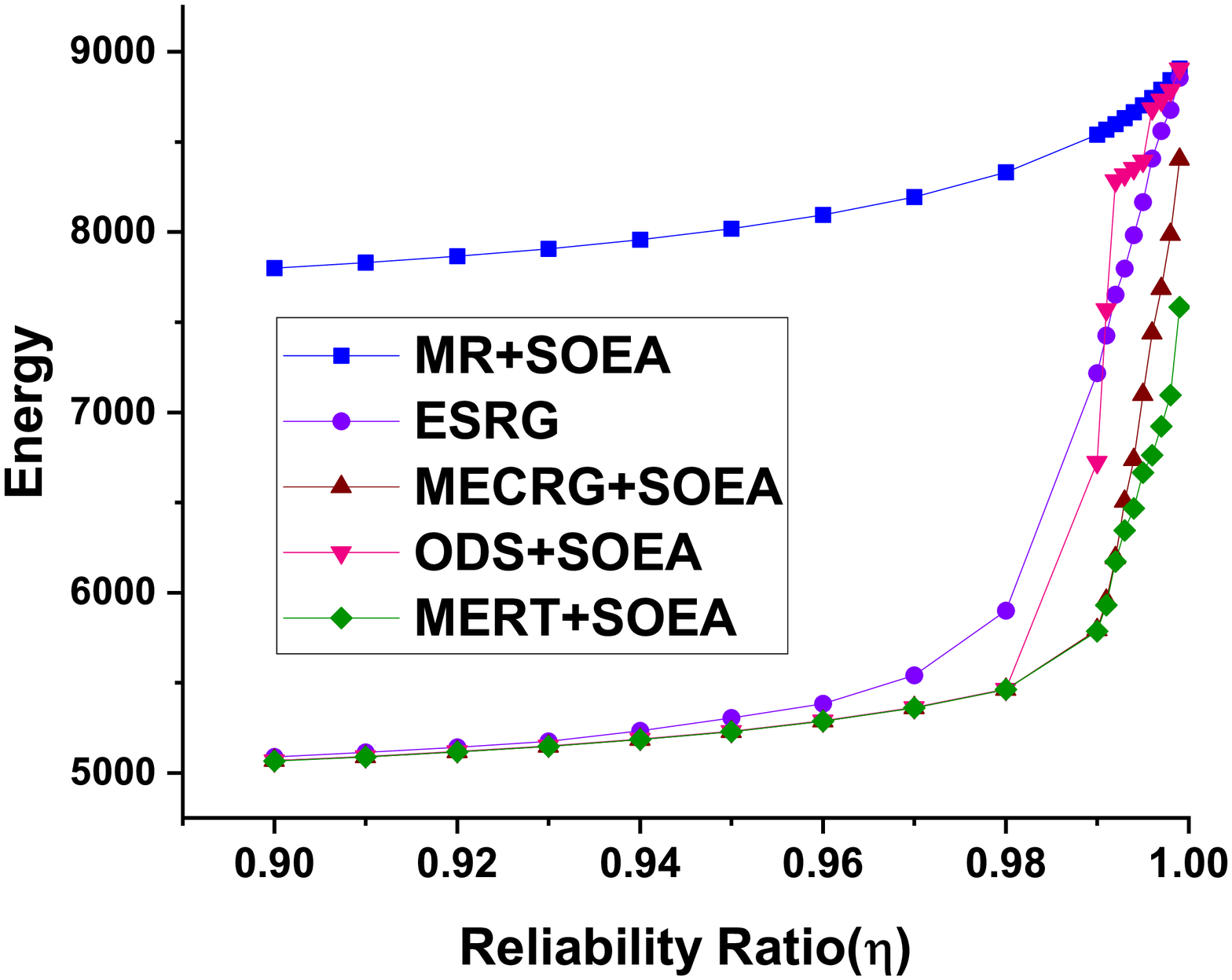} &
\includegraphics[width=0.26\textwidth]{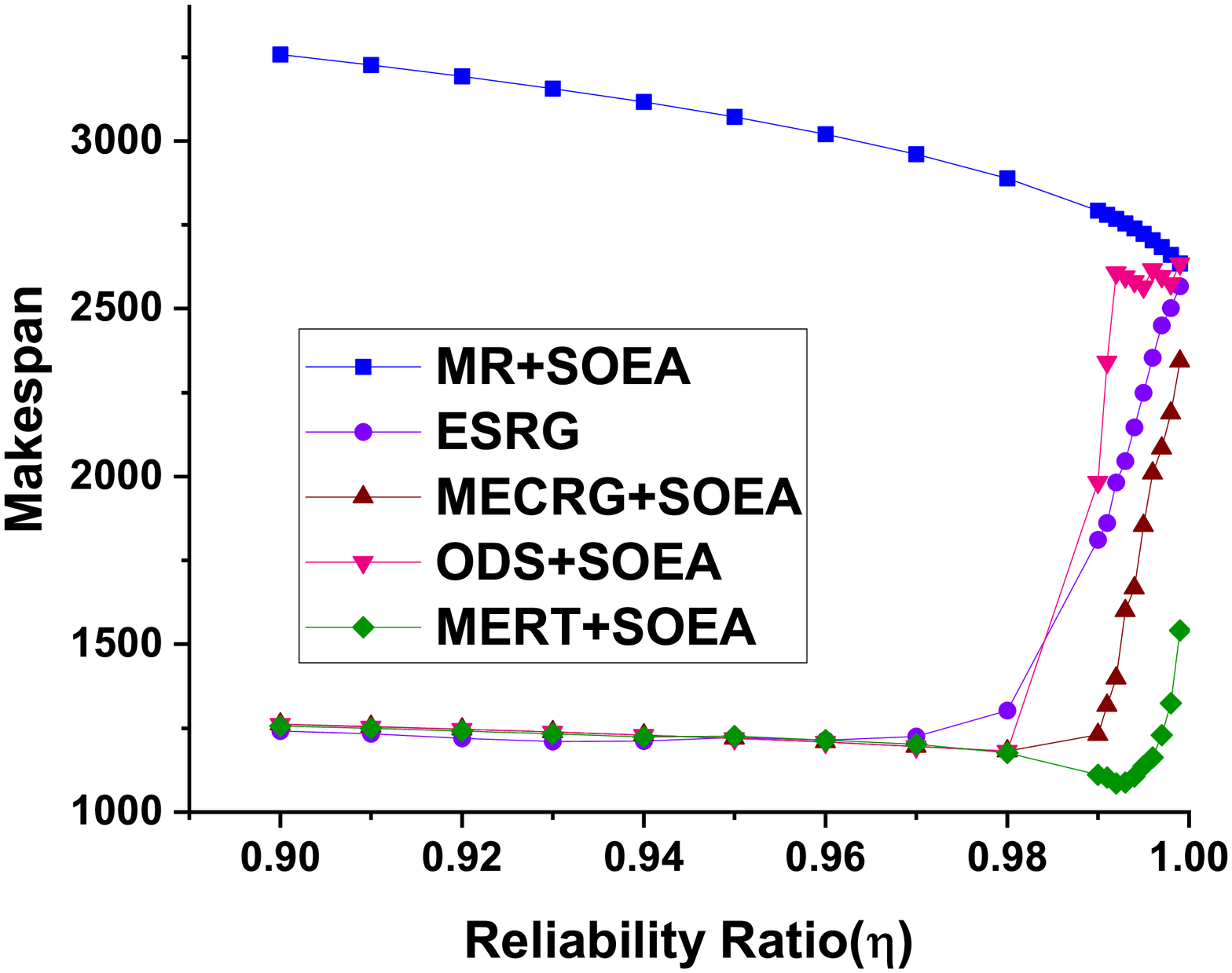} &
\includegraphics[width=0.26\textwidth]{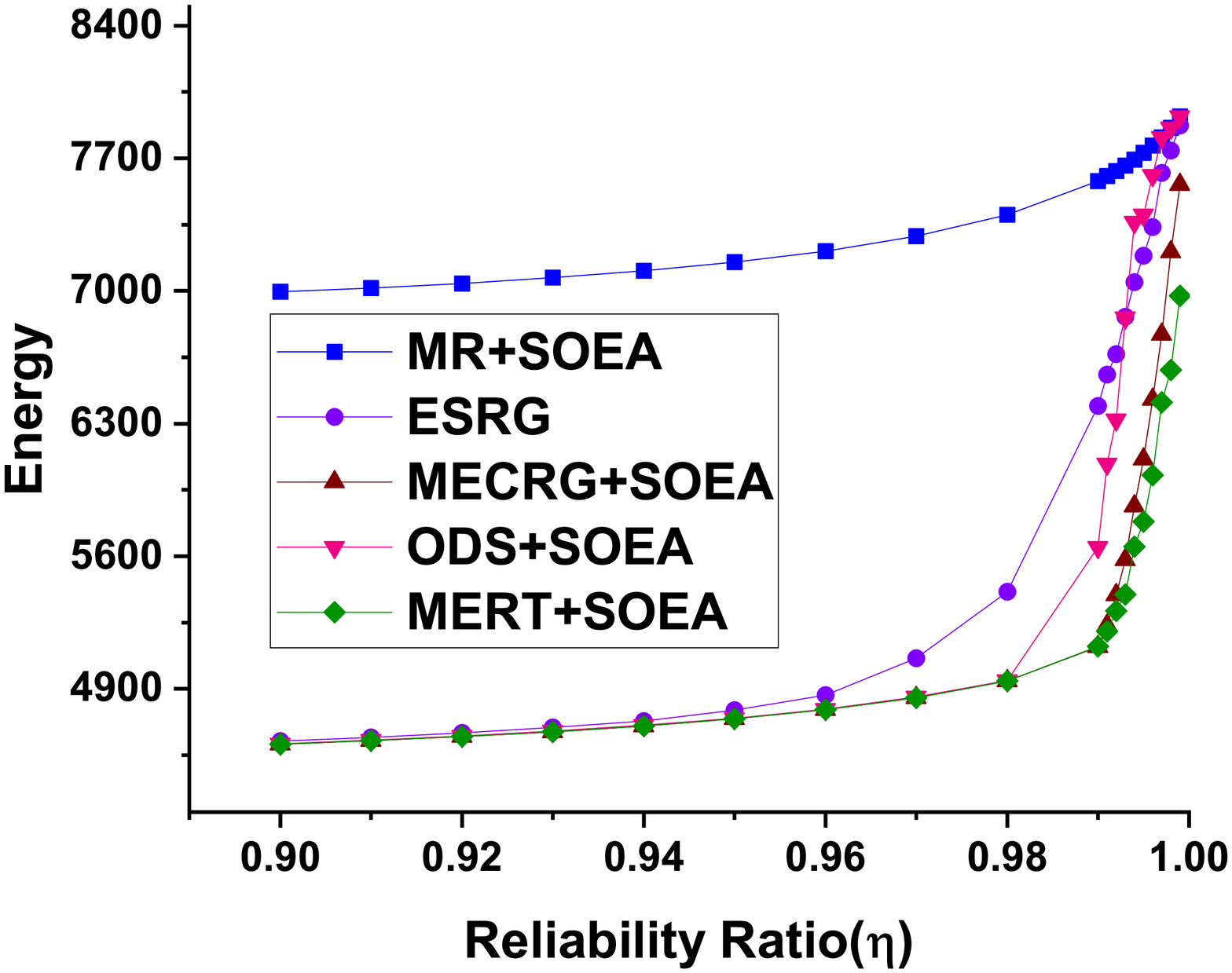} &
\includegraphics[width=0.26\textwidth]{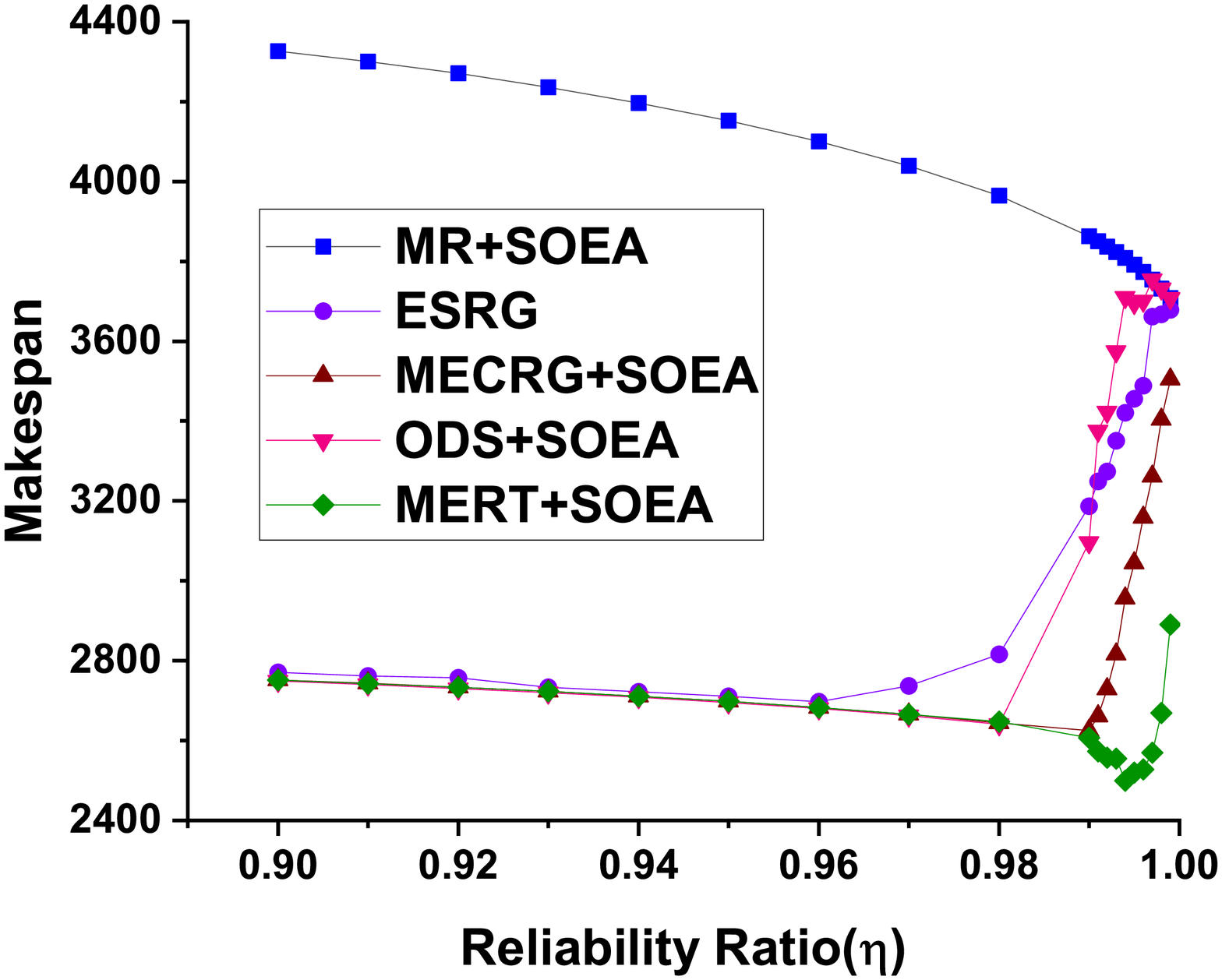} 
\\
(a) & (b) & (c) & (d) \\
\end{tabular}
\caption{Reliability ratio vs. makespan, energy consumption plots for the FFT and GE Workflows in non-fault tolerant setting}
\label{Fig:NFT}
\end{figure*}

\begin{figure*}[h]
\centering
\begin{tabular}{cc}
\includegraphics[scale=0.16]{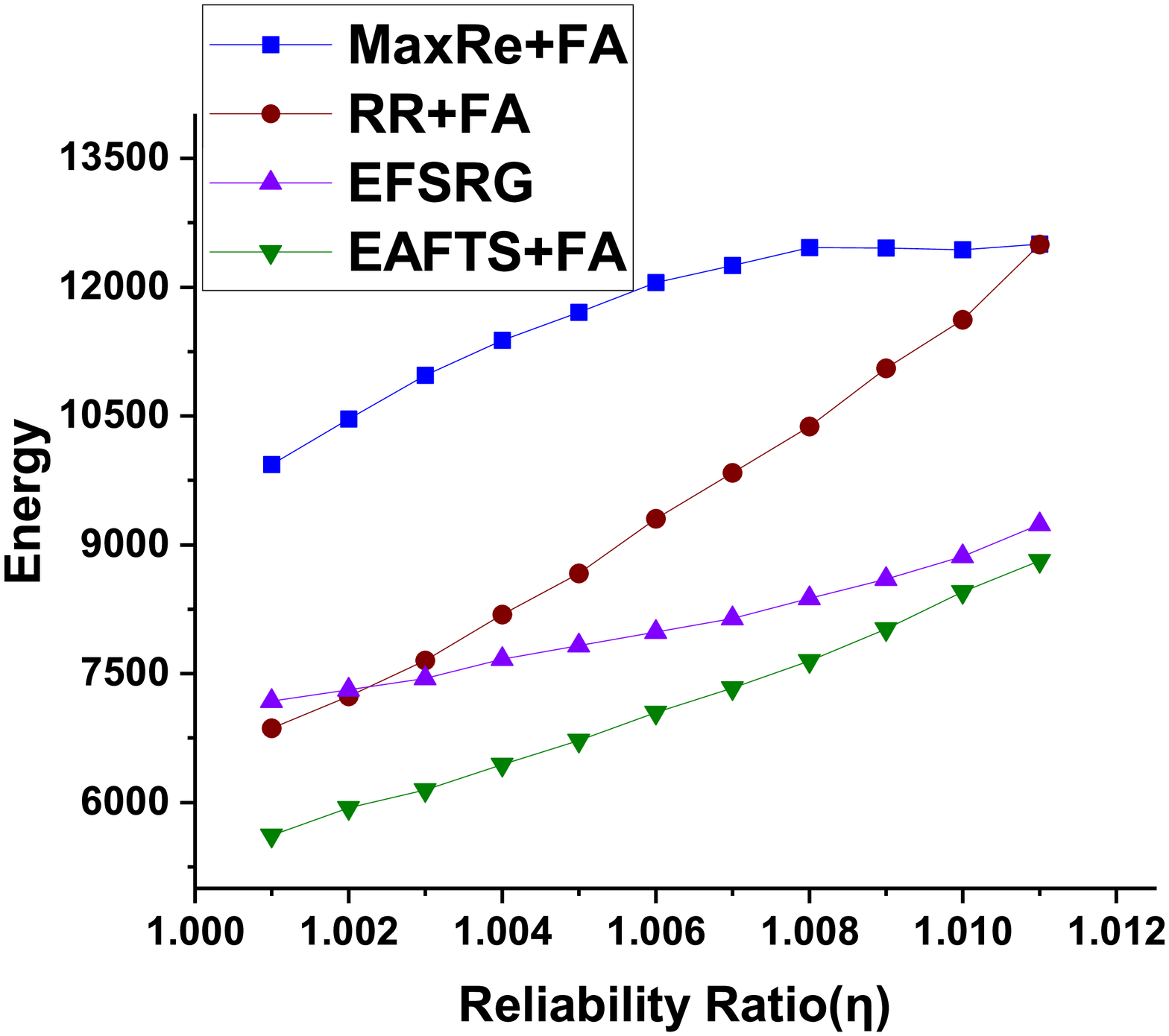} & \includegraphics[scale=0.16]{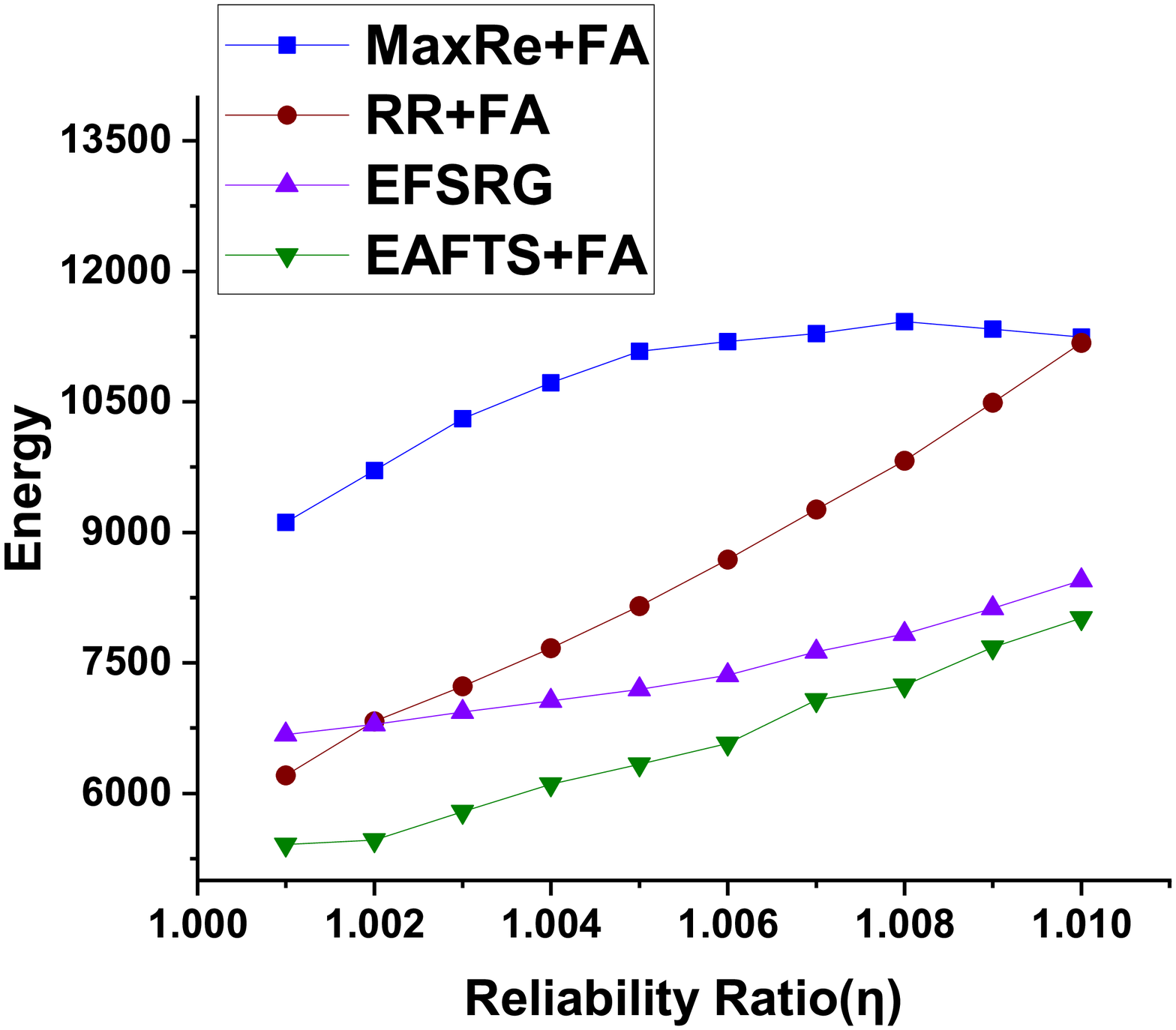} \\
(a) & (b) \\
\end{tabular}
\caption{Reliability ratio vs. energy consumption plots for the FFT and GE Workflows in fault tolerant setting}
\label{Fig:FT}
\end{figure*}

Below we discuss the performance of our proposed algorithms. The values when one algorithm outperforms another are reported on average as percentages, and our methods are compared with the following best method.
\begin{itemize}
    \item \textbf{Non-Fault Tolerant Setting:}
    Fig. \ref{Fig:NFT} (a), (c) show the plots for energy consumption and Fig. \ref{Fig:NFT} (b), (d) show the plots for makespan for FFT and GE workflows respectively. From Fig. \ref{Fig:NFT} (a), we can see that MERT+SOEA achieves the least energy in all cases. Up to $\eta = 0.98$, MERT+SOEA performs similarly to MECRG+SOEA and ODS+SOEA. Up to $\eta = 0.993$, we can see that MERT+SOEA performs similarly to MECRG+SOEA, while ODS+SOEA starts having increased energy consumption. After this, MERT+SOEA outperforms MECRG+SOEA by 7.74\%. From Fig. \ref{Fig:NFT} (b), we observe that up to $\eta = 0.95$, ESRG gives the least makespan performing slightly better than MERT+SOEA, in particular by  1.2\%. Again, up to $\eta = 0.993$, we can see that MERT+SOEA performs similarly to MECRG+SOEA, and ESRG starts having more makespan. After this, MERT+SOEA gives the least makespan outperforming MECRG+SOEA by 37.51\%.
    
    \par From Fig. \ref{Fig:NFT} (c), we can see that MERT+SOEA achieves the least energy in all cases. Up to $\eta = 0.98$, MERT+SOEA performs similarly to MECRG+SOEA and ODS+SOEA. Up to $\eta = 0.993$, we can see that MERT+SOEA performs similarly to MECRG+SOEA. After this, MERT+SOEA outperforms MECRG+SOEA by 5.91\%. From Fig. \ref{Fig:NFT} (d), we can see that MERT+SOEA achieves the least makespan in all cases. Again, up to $\eta = 0.993$, we can see that MERT+SOEA performs similarly to MECRG+SOEA. After this, MERT+SOEA significantly improves over MECRG+SOEA by 17.68\%.
    
    \item \textbf{Fault-Tolerant Setting:}
    \par Fig. \ref{Fig:FT} (a) shows the energy plot for the FFT workflow. In this case, $\eta_{\max} = 1.011$. It is observed that for all values of $\eta$, EAFTS+FA gives the least energy consumption. For $\eta = 1.001, 1.002$, the next best algorithm is RR+FA, followed by EFSRG, and EAFTS+FA outperforms RR+FA by 18.08\%. For $\eta \geq 1.003$, EFSRG performs better than RR+FA by 16.88\%, and EAFTS+FA outperforms EFSRG by 10.15\%.
    
    \par Fig. \ref{Fig:FT} (b) shows the energy plot for the GE workflow. In this case, $\eta_{\max} = 1.01$. Like before, EAFTS+FA gives the least energy consumption for all values of $\eta$. For $\eta = 1.001$, the next best algorithm is RR+FA, followed by EFSRG, and EAFTS+FA outperforms RR+FA by 12.85\%. For $\eta \geq 1.002$, EFSRG performs better than RR+FA by 15.04\%, and EAFTS+FA outperforms EFSRG by 10.55\%.
\end{itemize}

In summary, energy consumption increases with increasing reliability, and makespan decreases. However, for high-reliability constraints, the processors with early finish times may not satisfy the reliability target, leading to an increase in makespan. Our algorithm MERT+SOEA outperforms state-of-the-art algorithms by performing at least as well as them in terms of energy consumption and significantly better in terms of makespan, especially for higher values of $\eta$. In the fault-tolerant setting, our algorithm EAFTS+FA always outperforms the remaining algorithms. The improvement decreases as $\eta$ increases. 
RR always outperforms MaxRe because MaxRe assumes that each task has the same reliability target, whereas RR calculates the reliability target depending on the reliability values of already assigned tasks. Hence, MaxRe needs more replicas to satisfy the reliability target of each task, leading to more energy consumption.

\section{Conclusion and Future Research Directions} \label{Sec:CFA}
This paper proposes MERT and EAFTS allocation algorithms that minimize the total energy consumption under a given reliability constraint in the non-fault tolerant and fault tolerant settings, respectively. Additionally, MERT is designed to account for makespan as well. Next, we propose FA, a frequency allocation algorithm that can be combined with any other processor allocation algorithm in the fault-tolerant setting. All the algorithms are analyzed for their time requirements. Experimental evaluation of the proposed solutions on benchmark task graphs indicates they outperform the state-of-art algorithms. The future study on this problem will focus on developing more efficient solution approaches.

 \bibliographystyle{splncs04}
 \bibliography{Paper}
\end{document}